\theoremstyle{definition}
\theoremstyle{theorem}
\theoremstyle{remark}
\newtheorem{remark}{Remark}
\theoremstyle{proposition}
\newtheorem{proposition}{Proposition}
\theoremstyle{corollary}
\theoremstyle{proof}
\newtheorem{assumption}{Assumption}
\theoremstyle{assumption}
\theoremstyle{lemma}
\begin{document}
%
\title{Continuum Deformation Coordination of Multi-Agent Systems Using Cooperative Localization }



%
\author{\IEEEauthorblockN{Hossein Rastgoftar\IEEEauthorrefmark{1},
Sergey Nersesov\IEEEauthorrefmark{1}, and
Hashem Ashrafiuon\IEEEauthorrefmark{1}
}
\IEEEauthorblockA{\IEEEauthorrefmark{1}Mechanical Engineering Department, Villanova University, Villanova, USA 19085}}


\maketitle

\begin{abstract}
This paper studies the problem of decentralized continuum deformation coordination of multi-agent systems aided by cooperative localization. We treat agents as particles inside a triangular continuum (deformable body) in a $2$-D motion space and let the continuum deformation coordination be defined by three leaders located at vertices of a triangle, called the \textit{leading triangle}. The leaders' desired trajectories are assigned as the solution of a constrained optimal control problem such that safety requirements are satisfied in the presence of disturbance and measurement noise.  Followers distributed inside the leading triangle acquire continuum deformation in a decentralized fashion by integrating cooperative localization and local communication. Specifically, cooperative localization estimates the global positions of all agents using relative position measurements based primarily on proximity of agents. Simulation results are presented for a network of ten agents.

\end{abstract}


%
\IEEEpeerreviewmaketitle

\section{Introduction}
Cooperative localization (CL) has shown great promise in reducing state estimation errors for multi-agent systems particularly when no GPS is available \cite{Kia2016, Kia2018}. Cooperative localization algorithms are well-suited in decentralized coordination since they rely only on relative pose measurements and self propagation. Centralized cooperative localization algorithms \cite{Roumeliotis2002, Kia2016} have been shown to be better pose estimators but require each agent keeping track of their cross covariances with all other agents in ad hoc networks. However in networks with predefined communication graphs and/or formation, these same centralized algorithms can be applied in a decentralized manner since each agent only needs to keep track of nearby agents; i.e. those agents that it is able to measure their relative pose using on-board sensors. Hence, we can reliably use these algorithms for continuum deformation coordination of multi-agent systems.

Containment Control and Continuum Deformation Coordination \cite{rastgoftar2016continuum, rastgoftar2020fault} are two existing decentralized leader-follower methods in which a desired coordination is guided by a finite number of leaders and acquired by the remaining followers through local communication. Stability and convergence of the multi-agent containment control method are studied in \cite{liu2015containment, wang2019necessary, ji2008containment}. Multi-agent containment under fixed \cite{li2016containment} and switching \cite{xiong2018containment, wang2018containment} commutation protocols have been investigated. Researchers have also studied finite-time containment control \cite{meng2010distributed, zhao2015finite, he2014finite} as well as multi-agent containment under partial communication \cite{mu2015containment}.
Continuum deformation coordination treats agents as particles of an $n$-D deformable body where the desired coordination is defined by a homogeneous transformation, and $n\in \{1,2,3\}$ is the dimension of the continuum in a $3$-D motion space. A desired $n$-D homogeneous transformation can be defined by $n+1$ leaders agents representing the vertices an $n$-D virtual simplex called \textit{leading simplex}, where the desired trajectories of the leader agents are inferred through local communication. Continuum deformation coordination can formally specify and verify safety by assigning lower limits of eigenvalues of the Jacobian matrix of the homogeneous transformation. As a result, a large number of agents can aggressively deform in an obstacle-laden environment while inter-agent collision avoidance is assured. 

This paper develops a framework for decentralized continuum deformation coordination through simultaneous cooperative localization and local communication. Without loss of generality, this paper assumes that each individual agent is modeled by a double integrator dynamics coordinating in a $2$-D motion space. The desired continuum deformation is planned by the desired trajectories of the three leaders, located at vertices of the leading triangle, and acquired by followers through communication and localization. Assuming that the initial and final configurations of the leading triangles are known, the leaders' desired trajectories are assigned as the solution of a coupled optimization problem. More specifically, the leaders' optimal trajectories are determined as the solution of fixed-time constrained  optimal control problem while an optimization algorithm is employed to minimize travel time between the initial and final formations subject to all safety constraints.

This paper is organized as follows: Preliminary notions of graph theory and a review of homogeneous transformation coordination are presented in Section \ref{Preliminaries}. Problem Statement is presented in Section \ref{Problem Statement} and followed by continuum deformation planning and Cooperative Localization in  Sections \ref{Continuum Deformation Coordination Planning} and \ref{Pose Estimation using Cooperative Localization}, respectively. Safety of continuum deformation coordination is specified in Section  \ref{Safety Specification}. Simulation Results are presented in Section \ref{Simulation Results} followed by Conclusion in Section \ref{Conclusion}.

\section{Preliminaries}\label{Preliminaries}

\subsection{Graph Theory Notions}
\subsubsection{Cooperative Coordination Graph}
Inter-agent communication within the multi-agent system is defined by digraph $\mathcal{G}_c\left(\mathcal{V},\mathcal{E}_c\right)$ with node set $\mathcal{V}$ and edge set $\mathcal{E}_c\subset \mathcal{V}\times \mathcal{V}$. Every agent is identified by a unique index number $i\in \mathcal{V}$. Assuming the multi-agent system (MAS) consists of $N$ agents, $\mathcal{V}=\{1,\cdots,N\}$ can be expressed as $\mathcal{V}=\mathcal{V}_L\bigcup \mathcal{V}_F$, where $\mathcal{V}_L=\{1,\cdots,n+1\}$ and $\mathcal{V}_F=\{n+2,\cdots,N\}$ in and $n$-D continuum deformation coordination, i.e. is the dimension of a homogeneous deformation coordination.

In case when $n=2$, the multi-agent system is treated as particles within a $2$-D continuum. Therefore, $\mathcal{V}_L=\{1,2,3\}$ and $\mathcal{V}_F=\{3,\cdots,N\}$ define the index numbers of leaders and followers, respectively. The leader agents move independently and followers update their positions through inter-agent communication. In particular, when follower $i$ receives information from  three in-neighbor agents, the index numbers of those in-neighbors of the follower $i\in \mathcal{V}_F$ are defined as a set $\mathcal{N}_i=\left\{j\in \mathcal{V}\big|\left(j,i\right)\in \mathcal{E}_c\right\}$. Note that all in-neighbors are the followers contained inside the triangle formed by three leaders. 


In this paper we will make the following assumptions:
\begin{assumption}\label{assm1}
Leaders form a triangle at any time $t$. 
\end{assumption}
\begin{assumption}\label{assm2}
The in-neighbors of every follower $i$ form a triangle at any time $t$.
\end{assumption}
\begin{assumption}\label{assm3}
Every follower $i\in \mathcal{V}$ is inside the communication triangle made by its in-neighbor agents.
\end{assumption}
\begin{assumption}\label{assm4}
The digraph $\mathcal{G}$ is defined such that there exists at least one directed path from every leader to every follower agent.
\end{assumption}
This paper assumes that inner-agent communications have weights and the inter-agent communication topology is time-invariant. Let $w_{i,j}$ denote communication weight between agent $i$ and $j\in \mathcal{N}_i$. Then, we can define weight matrix $\mathbf{W}=\left[W_{ij}\right]\in \mathbb{R}^{\left(N-3\right)\times N}$ as follows:
\begin{equation}
    W_{ij}=
    \begin{cases}
    w_{i+n+1,j}&j\in \mathcal{N}_{i+n+1}\wedge \left(i+n+1\right)\in \mathcal{V}_F\\
    -1&j=i+n+1\\
    0&\mathrm{otherwise}
    \end{cases}
    .
\end{equation}
By partitioning $\mathbf{W}$,
\begin{equation}
\label{partition}
    \mathbf{W}=\begin{bmatrix}
    \mathbf{B}&\mathbf{A}
    \end{bmatrix}
    ,
\end{equation}
it has been proven that $\mathbf{A}\in \mathbb{R}^{\left(N-n-1\right)\times \left(N-n-1\right)}$ and $\mathbf{B}\in \mathbb{R}^{\left(N-n-1\right)\times \left(n+1\right)}$ hold the following properties \cite{rastgoftar2016continuum}:
\begin{enumerate}
    \item{Matrix $\mathbf{A}$ is a nonsingular M-matrix and Hurwitz, if there exists at least one path from every leader to every follower.} 
     \item{Diagonal elements of $\mathbf{A}$ are all $-1$.}
     \item{Matrix $\mathbf{B}$ and off-diagonal elements of $\mathbf{A}$ are non-negative.}
\end{enumerate}
\subsubsection{Cooperative Localization Graph}
We assume that leaders are equipped with GPS and thus no leader agent  needs to estimate its own position. Follower agents rely on cooperative localization to estimate their own positions at any time $t$. Cooperative localization is defined by directed graph $\mathcal{G}_l\left(\mathcal{V},\mathcal{E}_l\right)$ with node set $\mathcal{V}$ and edge set $\mathcal{E}_l\subset \mathcal{V}\times \mathcal{V}$. Note that node sets of the localization and coordination graphs are the same but edge sets $\mathcal{E}_c$ and $\mathcal{E}_l$ are different. 

\subsection{Position Notations}
For every agent $i\in \mathcal{V}$, we define actual position denoted by $\mathbf{r}_i(t)$ at time $t\geq t_0$, global desired position denoted by $\mathbf{r}_{i,HT}(t)$ at time $t\geq t_0$, and reference position denoted by $\mathbf{r}_{i,0}$ at time $t_0$. Note that global actual, global desired, and global reference positions of agent $i\in \mathcal{V}$ are expressed with respect to an inertial coordinate system with base vectors $\hat{\mathbf{e}}_x$ and $\hat{\mathbf{e}}_y$. We define $\hat{\mathbf{e}}_x=\left[1~0\right]^T$ and $\hat{\mathbf{e}}_y=\left[0~1\right]^T$, $\mathbf{r}_i(t)=\left[x_i~y_i\right]^T$, $\mathbf{r}_{i,HT}(t)=\left[x_{i,HT}~y_{i,HT}\right]^T$, and $\mathbf{r}_{i,0}=\left[x_{i,0}~y_{i,0}\right]^T$.

\begin{assumption}\label{assm5}
Global desired position $\mathbf{r}_{i,HT}$ is identical to Global reference position $\mathbf{r}_{i,0}$ at time $t=t_0$ for every agent $i\in \mathcal{V}$, i.e. $\mathbf{r}_{i,HT}(t_0)=\mathbf{r}_{i,0}$ for every agent $i\in \mathcal{V}$.
\end{assumption}




\subsection{Homogeneous Deformation Coordination}
Homogeneous transformation of the multi-agent system is given by
\begin{equation}
\label{HT}
    \forall i\in \mathcal{V},~t\geq t_0,\qquad \mathbf{r}_{i,HT}=\mathbf{Q}(t)\mathbf{r}_{i,0}+\mathbf{d}(t)
\end{equation}
where $\mathbf{Q}(t)\in \mathbb{R}^{2\times 2}$ is non-singular at any time  $t\geq t_0$ and  $\mathbf{Q}(t_0)=\mathbf{I}_2\in \mathbb{R}^{2\times 2}$,  $\mathbf{d}\in \mathbb{R}^{2\times 1}$ is the rigid-body displacement vector. {\color{black}Per Assumption \ref{assm5}, $\mathbf{d}(t_0)=\mathbf{0}$.}

\begin{proposition}
Let $\mathbf{Q}(t)$ be expressed as 
\begin{equation}
t\geq t_0,\qquad     \mathbf{Q}(t)=\mathbf{R}_D(t)\mathbf{U}_D(t)
\end{equation}
using polar decomposiition, where $\mathbf{R}_D(t)$ is an orthogonal (rotation) matrix, and  $\mathbf{U}_D(t)$ is a symmetric (pure deformation) matrix. If $\mathbf{Q}(t_0)=\mathbf{I}_2$ and matrix $\mathbf{Q}(t)$ is non-singular at any time $t$, then, eigenvalues of matrix $\mathbf{U}_D$, denoted by $\lambda_1(t)$ and $\lambda_2(t)$, are all positive at any time $t$ which in turn implies that  matrix $\mathbf{U}_D(t)$ is positive definite at any time $t\geq t_0$. 
\end{proposition}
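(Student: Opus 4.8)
The plan is to combine two facts: (i) non-singularity of $\mathbf{Q}(t)$ keeps the product $\lambda_1(t)\lambda_2(t)$ bounded away from zero, and (ii) the initial condition $\mathbf{Q}(t_0)=\mathbf{I}_2$ pins the eigenvalues of $\mathbf{U}_D$ to the value $1$ at $t=t_0$; a continuity/connectedness argument then forbids any sign change, so the eigenvalues stay positive and $\mathbf{U}_D(t)$ stays positive definite.

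First I would record the elementary structural facts. Since $\mathbf{R}_D(t)$ is orthogonal, $\det\mathbf{R}_D(t)=\pm 1$, so taking determinants in $\mathbf{Q}(t)=\mathbf{R}_D(t)\mathbf{U}_D(t)$ gives $\lvert\det\mathbf{Q}(t)\rvert=\lvert\det\mathbf{U}_D(t)\rvert=\lvert\lambda_1(t)\lambda_2(t)\rvert$. Because $\mathbf{Q}(t)$ is non-singular for every $t\geq t_0$, this quantity is strictly positive, hence neither $\lambda_1(t)$ nor $\lambda_2(t)$ ever vanishes. Moreover, since $\mathbf{U}_D(t)$ is a real symmetric matrix, $\lambda_1(t)$ and $\lambda_2(t)$ are real, so at any fixed instant each of them is either strictly positive or strictly negative.

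Next I would use the initial datum together with the (standing) assumption that the polar factors vary continuously in $t$; equivalently, that $\mathbf{U}_D(t)$ is taken to be the canonical symmetric positive-semidefinite square root $\bigl(\mathbf{Q}^{T}(t)\mathbf{Q}(t)\bigr)^{1/2}$, which is a continuous function of $t$. Evaluating at $t_0$, $\mathbf{I}_2=\mathbf{Q}(t_0)=\mathbf{R}_D(t_0)\mathbf{U}_D(t_0)$ forces $\mathbf{U}_D(t_0)=\mathbf{I}_2$, so $\lambda_1(t_0)=\lambda_2(t_0)=1>0$. Since the roots of the characteristic polynomial of $\mathbf{U}_D(t)$ depend continuously on its entries, the maps $t\mapsto\lambda_k(t)$ are continuous on $[t_0,\infty)$. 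If $\lambda_k$ were negative at some time $t_1>t_0$, the intermediate value theorem on $[t_0,t_1]$ would give a time at which $\lambda_k=0$, contradicting the previous step. Hence $\lambda_1(t),\lambda_2(t)>0$ for all $t\geq t_0$, and a real symmetric matrix with strictly positive eigenvalues is positive definite, which is the assertion.

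The only real subtlety, and the point I would flag explicitly at the outset of the proof, is that ``symmetric'' alone does not force positive eigenvalues (for instance $-\mathbf{I}_2$ is symmetric), so the hypothesis $\mathbf{Q}(t_0)=\mathbf{I}_2$ is genuinely needed, and it must be used in tandem with the continuity of the polar factorization in $t$ (equivalently, the canonical PSD choice of $\mathbf{U}_D$). Once continuity and the initial value are secured, the remainder is just a one-dimensional intermediate-value argument; without them the statement fails.
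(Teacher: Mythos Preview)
Your proposal is correct and follows essentially the same route as the paper: show that non-singularity of $\mathbf{Q}(t)$ forces $\lambda_1(t),\lambda_2(t)\neq 0$, pin them to $1$ at $t_0$, and invoke continuity to preclude a sign change. The paper reaches the nonvanishing step via the spectral factorization $\mathbf{U}_D=\mathbf{S}\mathbf{\Lambda}\mathbf{S}^T$ rather than your determinant identity, and it leaves the continuity of the eigenvalues implicit, whereas you spell out the intermediate-value argument and flag the need for continuity of the polar factors; otherwise the arguments coincide.
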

\begin{proof}
Because $\mathbf{U}_D$ is symmetric, it can be expressed as 
\[
\mathbf{U}_D(t)=\mathbf{S}(t)\mathbf{\Lambda}(t)\mathbf{S}^T(t)
\]
at any time $t\geq t_0$, where $\mathbf{S}(t)$ is orthogonal and $\mathbf{\Lambda}(t)=\mathrm{diag}\left(\lambda_1(t),\lambda_2(t)\right)$ is digonal. 
Now, matrix $\mathbf{Q}(t)$ can be expressed as
\[
\mathbf{Q}(t)=\mathbf{R}_D(t)\mathbf{S}(t)\mathbf{\Lambda}(t)\mathbf{S}^T(t).
\]
Because $\mathbf{R}_D$ and $\mathbf{S}$ are orthogonal and matrix $\mathbf{Q}$ is non-singular at any time $t$, eigenvalues $\lambda_1(t)$ and $\lambda_2(t)$ are non-zero at any time $t$ where  $ \lambda_1(t_0)=\lambda_2(t_0)=1$. Because   $\lambda_1(t_0)$ and $\lambda_2(t_0)$ are positive at time $t=t_0$ and they never become zero due to nonsingularity of matrix $\mathbf{Q}(t)$ at any time $t$, it implies that $\lambda_{1}(t)$ and $\lambda_2(t)$ are positive at any time $t\geq t_0$. Hence, $\mathbf{U}_D(t)$ is positive definite at any time $t\geq t_0$.
\end{proof}

\subsubsection{Homogeneous Deformation Definition}
Since Assumption \ref{assm1} holds at time $t_0$, leaders form a triangle at time $t_0$, and elements of $\mathbf{Q}$ and $\mathbf{d}$ are uniquely defined by the leaders' global desired positions as \cite{rastgoftar2016continuum}
\begin{equation}
    \begin{bmatrix}
    Q_{11}{\color{black}(t)}\\
    Q_{12}{\color{black}(t)}\\
    Q_{21}{\color{black}(t)}\\
    Q_{22}{\color{black}(t)}\\
    d_1{\color{black}(t)}\\
    d_2{\color{black}(t)}
    \end{bmatrix}
    =
    \begin{bmatrix}
    x_{1,0}&y_{1,0}&0&0&1&0\\
    x_{2,0}&y_{2,0}&0&0&1&0\\
    x_{3,0}&y_{3,0}&0&0&1&0\\
    0&0&x_{1,0}&y_{1,0}&0&1\\
    0&0&x_{2,0}&y_{2,0}&0&1\\
    0&0&x_{3,0}&y_{3,0}&0&1\\
    \end{bmatrix}
    ^{-1}
    \begin{bmatrix}
    x_{1,HT}{\color{black}(t)}\\
    x_{2,HT}{\color{black}(t)}\\
    x_{3,HT}{\color{black}(t)}\\
    y_{1,HT}{\color{black}(t)}\\
    y_{2,HT}{\color{black}(t)}\\
    y_{3,HT}{\color{black}(t)}\\
    \end{bmatrix}
    .
\end{equation}
Therefore, a desired homogeneous deformation can be planned either by elements of matrix $\mathbf{Q}$, denoted  by $Q_{11}$, $Q_{12}$, $Q_{21}$, and $Q_{22}$, and vector $\mathbf{d}$, denoted by $d_1$ and $d_2$, or by planning the leaders' global desired position components, denoted by $x_{1,HT}$, $x_{2,HT}$, $x_{3,HT}$, $y_{1,HT}$, $y_{2,HT}$, and $x_{3,HT}$.

Because homogeneous transformation is a linear transformation, global desired position of follower $i\in \mathcal{V}_F$ can be equivalently defined by \eqref{HT} or  expressed as a linear combination of the leaders' global desired position by
\begin{equation}
    \forall i\in \mathcal{V}_F,\forall t\geq t_0.\qquad \qquad \mathbf{r}_{i,HT}(t)=\sum_{j\in \mathcal{V}_L}\alpha_{i,j}\mathbf{r}_{j,HT}(t)
\end{equation}
where $\alpha_{i,1}$, $\alpha_{i,2}$, and $\alpha_{i,3}$ are constants that are uniquely assigned by solving
\begin{equation}
   \begin{bmatrix}
   \alpha_{i,1}\\
   \alpha_{i,2}\\
   \alpha_{i,3}
   \end{bmatrix}
   =
   \begin{bmatrix}
   x_{1,0}&x_{2,0}&x_{3,0}\\
   y_{1,0}&y_{2,0}&y_{3,0}\\
   1&1&1
   \end{bmatrix}
   \begin{bmatrix}
   x_{i,0}\\
   y_{i,0}\\
   1
   \end{bmatrix}
\end{equation}
for every agent $i\in \mathcal{V}_F$.


\section{Problem Statement}\label{Problem Statement}
This paper considers coordination of a double integrator agent team moving in  the $x-y$ plane. Dynamics of agent $i\in \mathcal{V}$ is given by
\begin{equation}\label{eq:states_prop}
\mathbf{x}_{i}(k+1) = \mathbf{A}_i\mathbf{x}_{i}(k)  + \mathbf{B}_i\left[\mathbf{u}_i(k) + \mathbf{\eta}_i(k)\right]
\end{equation}
where $\mathbf{x}_{i}=[\mathbf{r}_i^\top~~\dot{\mathbf{r}}_i^\top]^\top$ and $\mathbf{\eta}_i$ refer to the state, and process noise vectors, respectively. Matrices
\begin{subequations}\label{eq:AiBi}
\begin{equation}
    \mathbf{A}_i =
   \begin{bmatrix}
   1 & 0 & \Delta T & 0 \\ 0 & 1 & 0 & \Delta T \\ 0 & 0 & 1 & 0 \\ 0 & 0 & 0 & 1
   \end{bmatrix}
   ,
\end{equation}
\begin{equation}
    \mathbf{B}_i =
   \begin{bmatrix}
   0 & 0 \\ 0 & 0 \\ \Delta T & 0 \\ 0 & \Delta T
   \end{bmatrix}
\end{equation}
\end{subequations}
where $\Delta T$ is the sample time step.
For agent $i\in \mathcal{V}$, we define control input $\mathbf{u}_i$ as follows:
\begin{equation}
\label{main2ninput}
    \mathbf{u}_i=
    \begin{cases}
    g_1\left(\dot{\mathbf{r}}_{i,HT}-\dot{\mathbf{r}}_i\right)+g_2\left({\mathbf{r}}_{i,HT}-{\mathbf{r}}_i\right)&i\in \mathcal{V}_L\\
     g_1\sum_{j\in \mathcal{N}_i}w_{i,j}\left(\dot{\hat{\mathbf{r}}}_{j}-\dot{\hat{\mathbf{r}}}_{i}\right)+g_2\sum_{j\in \mathcal{N}_i}w_{i,j}\left({\hat{\mathbf{r}}}_{j}-{\hat{\mathbf{r}}}_{i}\right)&i\in \mathcal{V}_F\\
    \end{cases}
\end{equation}
where $w_{i,j}>0$ is a constant communication weight between agent $i\in \mathcal{V}_F$ and in-neighbor $j\in \mathcal{N}_i$, {\color{black}$\mathbf{r}_i$ and $\mathbf{r}_j$ denote actual positions of agents $i$ and $j$, $\hat{\mathbf{r}}_i$ and $\hat{\mathbf{r}}_j$ denote the estimations of actual positions of agents $i$ and $j$, and $\hat{\mathbf{r}}_{i,HT}$ is the global desired position of leader agent $i\in \mathcal{V}_L$}. This paper assumes that followers' communication weights are consistent with agents' reference positions and obtained by
\begin{equation}
   \begin{bmatrix}
   w_{i,i_1}\\
   w_{i,i_2}\\
  w_{i,i_3}
   \end{bmatrix}
   =
   \begin{bmatrix}
   x_{i_1,0}&x_{i_2,0}&x_{i_3,0}\\
   y_{i_1,0}&y_{i_2,0}&y_{i_3,0}\\
   1&1&1
   \end{bmatrix}
   \begin{bmatrix}
   x_{i,0}\\
   y_{i,0}\\
   1
   \end{bmatrix}
\end{equation}
where $g_1,g_2>0$ are constant; $i_1$, $i_2$, and $i_3$ are the index numbers of the in-neighbors of follower $i\in \mathcal{V}_F$, i.e. $\mathcal{N}_i=\{i_1,i_2,i_3\}$ defines the index numbers of the in-neighbors of agent $i\in \mathcal{V}_F$.

The above continuum deformation coordination problem is defined as a decentralized leader-follower coordination problem.  In Section \ref{Continuum Deformation Coordination Planning}, we assume that initial and final configurations of leaders are given, and assign the  desired trajectories as a solution of a constrained optimal control problem. We offer a cooperative localization method in Section \ref{Pose Estimation using Cooperative Localization} to acquire a desired continuum deformation coordination in a decentralized fashion. Furthermore, we provide safety conditions in Section \ref{Safety Specification} to check and ensure that collision is avoided in decentralized continuum coordination, inferred by cooperative localization.

\section{Continuum Deformation Coordination Planning} \label{Continuum Deformation Coordination Planning}
In this section, we discuss how the desired positions of the leaders are planned. Specifically, we consider leaders' desired positions planned according to the minimum control effort with fixed initial and final positions and velocities and the fixed area of the triangle made up by three leaders. Indeed, let $\mathbf{r}_{i,HT}$ be updated by the double integrator dynamics
\[
    i\in \mathcal{V},~q\in \{x,y\},\qquad \ddot{q}_{i,HT}=v_{q,i}
\]
It is desired to assign $v_{q,i}$ such that cost function
\begin{equation}
    \mathrm{J}=\int_{t_0}^{t_f}\left(\sum_{i\in \mathcal{V}_L}\left(v_{x,i}^2+v_{y,i}^2\right)\right) \,dt
\end{equation}
is minimized subject to boundary conditions
\begin{subequations}
\label{bcond}
\begin{equation}
q\in \{x,y\},~i\in \mathcal{V}_L,\qquad    q_{i,HT}\left(t_0\right)=q_{i,0},
\end{equation}
\begin{equation}
q\in \{x,y\},~i\in \mathcal{V}_L,\qquad     q_{i,HT}\left(t_f\right)=q_{i,f},
\end{equation}
\begin{equation}
 q\in \{x,y\},~i\in \mathcal{V}_L,\qquad    \dot{q}_{i,HT}\left(t_0\right)=\dot{q}_{i,0},
\end{equation}
\begin{equation}
q\in \{x,y\},~i\in \mathcal{V}_L,\qquad     \dot{q}_{i,HT}\left(t_f\right)=\dot{q}_{i,f},
\end{equation}
\end{subequations}
and equality constraint 
\begin{equation}
    2a_0-\left|
    \begin{array}{ccc}
         x_{1,HT}(t)&x_{2,HT}(t) &x_{3,HT}(t) \\
        y_{1,HT}(t)&y_{2,HT}(t) &y_{3,HT}(t) \\
         1&1&1
    \end{array}
    \right|=0
\end{equation}
where $a_0$ is the area of the leading triangle with vertices occupied by leaders $1$, $2$, and $3$, respectively. Note that the area $a_0$ and travel time $T=t_f-t_0$ are both fixed. The solution of the above optimal control problem was presented in Ref. \cite{rastgoftar2018cooperative}.

\section{Cooperative Localization}\label{Pose Estimation using Cooperative Localization}
In order to control a multi-agent system, at least a position feedback from all agents is needed. However, under practical circumstances such feedback data might not be available. Alternatively, what might be available is the limited information about relative pose of the agents with respect to each other. In this section, we employ cooperative localization for state estimation in order to ensure availability of accurate feedback in the multi-agent system. Specifically, we present an algorithm to reconstruct the full state of each agent based on the limited information about each agent's relative pose.

\subsection{Dynamics of Individual Agents}
Utilizing Extended Kalman Filter (EKF), we denote the predicted and updated state estimates for agent $i$ as $\hat{\mathbf{x}}_{i-}$ and $\hat{\mathbf{x}}_{i+}$, respectively. Referring to \eqref{eq:states_prop}-\eqref{eq:AiBi}, agents $i=1,\cdots,n$ can propagate their states from sample time $(k)$ to $k+1$ given initial conditions ${\mathbf{x}}_i(0)$ and starting with cross covariance $\mathbf{P}_{ij}(0)=0_4$ and initial guess of $\mathbf{P}_{ii}(0)>0$ for covariance,  as follows:
\begin{subequations}\label{eq:propagation}
\begin{align}
\hat{\mathbf{x}}_{i-}(k+1) &=  \mathbf{A}_i\hat{\mathbf{x}}_{i+}(k) + \mathbf{B}_i \mathbf{u}_i(k) \\
\mathbf{P}_{ii-}(k+1) &= \mathbf{B}_i \mathbf{P}_{ii+}(k) \mathbf{B}_i^T + \mathbf{B}_i \mathbf{Q}_i \mathbf{B}_i^T \\
\mathbf{P}_{ij-}(k+1) &= \mathbf{B}_i \mathbf{P}_{ii+}(k) \mathbf{B}_j^T
\end{align}
\end{subequations}
where $j=1,\cdots,n,~j\ne i$, and $\mathbf{Q}_i>0$ is the the process noise covariance.

The key to success of cooperative localization is the cross covariance \cite{Kia2016} which is the correlation term between agent $i$ and $j$ and comes into play when agent $i$ takes relative pose measurements of agent $j$. Next, assume agent $i$ in the set $\{1,\cdots,N\}$ can take a measurement of agent $j\ne i$. Then, the measurement vector ${\mathbf{y}}_{ij}$ can be written as
\begin{equation}\label{eq:output}
{\mathbf{y}}_{ij}(k+1) = {\mathbf{c}}_{ij}({\mathbf{x}}_{i}(k),{\mathbf{x}}_{j}(k)) + \mathbf{\nu}_{ij}(k)
\end{equation}
where $\mathbf{\nu}_{ij}(k)$ represents measurement noise. Assuming agent $i$ has an on-board sensor such as Lidar, then the measurements consist of relative range $d_{ij}$ and azimuth angle $\theta_{ij}$:
\begin{equation}\label{eq:measure}
{\mathbf{c}}_{ij} = \begin{bmatrix} d_{ij} \\ \theta_{ij} \end{bmatrix} = 
\begin{bmatrix} \sqrt{(x_j-x_i)^2+(y_j-y_i)^2} \\ \arctan{\frac{y_j-y_i}{x_j-x_i}} \end{bmatrix}
\end{equation}

At the update step, the state and covariance matrices remain the same if there are no relative measurements. Let us now assume that a follower agent $f$ takes relative measurements of a leader agent $l$. Then, the innovation residual error between the measurement $\mathbf{y}_{fl}$ and the estimated output through propagated states is
\begin{equation}\label{eq:innovation_update}
\begin{split}
&{\mathbf{e}}_{fl}(k+1) = {\mathbf{y}}_{fl}(k+1) - {\mathbf{c}}_{fl}(\hat{\mathbf{x}}_{f-}(k+1),\hat{\mathbf{x}}_{l-}(k+1)) \\
 &\approx {\mathbf{y}}_{fl}(k+1) - \left[\mathbf{C}_{fl}(k+1)~~-\mathbf{C}_{fl}(k+1)\right]
 \begin{bmatrix} \hat{\mathbf{x}}_{f-}(k+1) \\ \hat{\mathbf{x}}_{l-}(k+1) \end{bmatrix}
\end{split}
\end{equation}
where
\begin{equation}\label{eq:H}
\begin{split}
 \mathbf{C}_{fl} = \begin{bmatrix} 
        -\frac{\hat{x}_l-\hat{x}_f}{d_{fl}} & -\frac{\hat{y}_l-\hat{y}_f}{d_{fl}}  & 0 & 0 \\
        \frac{\theta_{fl}}{\hat{x}_l-\hat{x}_f} & -\frac{1}{\hat{x}_l-\hat{x}_f} & 0 & 0
       \end{bmatrix}.
\end{split}
\end{equation}

The innovation covariance matrix is modified due to the cross covariance terms as follows:
\begin{equation}\label{eq:innovation_covariance_update}
\begin{split}
\mathbf{S}_{fl} &= \mathbf{C}_{fl}(k+1) \mathbf{P}_{ff-}(k+1){\mathbf{C}_{fl}}^{T}(k+1) \\
&+ \mathbf{C}_{fl}(k+1)\mathbf{P}_{ll-}(k+1){\mathbf{C}_{fl}}^{T}(k+1) \\
&-2\mathbf{C}_{fl}(k+1) \mathbf{P}_{fl-}(k+1){\mathbf{C}_{fl}}^{T}(k+1) + \mathbf{R}_{fl}(k+1)
\end{split}
\end{equation}
$\mathbf{R}_{fl}>0$ is the the measurement noise covariance. Using (\ref{eq:innovation_update}) and (\ref{eq:innovation_covariance_update}), the Kalman gains, states and covariances for all agents $i=1,\cdots,n$ are updated as follows 
\begin{equation}\label{eq:states_update_estimate}
\begin{split}
\mathbf{K}_i(k+1) &= [\mathbf{P}_{if-}(k+1) - \mathbf{P}_{il-}(k+1)] {\mathbf{C}_{fl}}^T \mathbf{S}_{fl}^{-1} \\
\hat{\mathbf{x}}_{i+}(k+1) &= \hat{\mathbf{x}}_{i-}(k+1) + \mathbf{K}_i (k+1) {\mathbf{e}}_{fl} (k+1) \\
\mathbf{P}_{ii+}(k+1) &= \mathbf{P}_{ii-}(k+1) - \mathbf{K}_i(k+1)\mathbf{S}_{fl}(k+1) {\mathbf{K}_i}^T(k+1) \\
\mathbf{P}_{ij+}(k+1) &= \mathbf{P}_{ij-}(k+1) - \mathbf{K}_i(k+1)\mathbf{S}_{fl}(k+1) {\mathbf{K}_j}^T(k+1)
\end{split}
\end{equation}
where $j=1,\cdots,n,~j\ne i$. 
Note that, the presented method is centralized in ad hoc networks. However, continuum deformation relies on each follower agent following three leaders. Therefore, each agent $i$ only needs to keep track of only three cross covariances with its leaders. This allows this simple cooperative localization algorithm to become effectively decentralized.

To prove observer stability of cooperative localization (CL) algorithm, we define the collection of all estimated states as ${\hat{\mathbf{X}}}_{\mathrm{CL}} = \begin{bmatrix} \hat{\mathbf{x}}_4&\cdots&\hat{\mathbf{x}}_N \end{bmatrix}^T$ and collective input as $\mathbf{U}_{\mathrm{CL}} = \begin{bmatrix} \mathbf{u}_4&\cdots&\mathbf{u}_N \end{bmatrix}^T$. Then, the collective propagation equations are written as:
\begin{subequations}\label{eq:propagation_collective}
\begin{align}
{\hat{\mathbf{X}}}_{\mathrm{CL-}}(k+1) &=  \mathbf{A}_{\mathrm{CL}}{\hat{\mathbf{X}}}_{\mathrm{CL+}}(k) + \mathbf{B}_{\mathrm{CL}} \mathbf{U}_{\mathrm{CL}}(k) \\
\mathbf{P}_{\mathrm{CL-}}(k+1) &= \mathbf{B}_{\mathrm{CL}}^\top \mathbf{P}_{\mathrm{CL+}}(k) \mathbf{B}_{\mathrm{CL}} + \mathbf{B}_{\mathrm{CL}}^T \mathbf{Q}_{\mathrm{CL}} \mathbf{B}_{\mathrm{CL}}
\end{align}
\end{subequations}
where $\mathbf{A}_{\mathrm{CL}} = \mathrm{Diag}\begin{bmatrix} \mathbf{A}_4&\cdots&\mathbf{A}_N \end{bmatrix}\in\mathbb{R}^{4(N-3)\times4(N-3)}$, $\mathbf{B}_{\mathrm{CL}} = \mathrm{Diag}\begin{bmatrix} \mathbf{B}_4&\cdots&\mathbf{B}_N \end{bmatrix}\in\mathbb{R}^{4(N-3)\times2(N-3)}$, $\mathbf{Q}_{\mathrm{CL}} = \mathrm{Diag}\begin{bmatrix} \mathbf{Q}_4&\cdots&\mathbf{Q}_N \end{bmatrix}\in\mathbb{R}^{2(N-3)\times2(N-3)}$, and $\mathbf{P}_{\mathrm{CL}}\in\mathbb{R}^{4(N-3)\times4(N-3)}$ is constructed according to (\ref{eq:propagation}).
The collective update equations are:
\begin{subequations}\label{eq:update_collective}
\begin{align}
\mathbf{K}_{\mathrm{CL}}(k+1) &= \mathbf{P}_{\mathrm{CL-}}(k+1)\mathbf{C}_{\mathrm{CL}}^T(k+1)\mathbf{S}_{\mathrm{CL}}^{-1} \\
{\hat{\mathbf{X}}}_{\mathrm{CL+}}(k+1) &= {\hat{\mathbf{X}}}_{\mathrm{CL-}}(k+1) + \mathbf{K}_{\mathrm{CL}}(k+1)\mathbf{Y}_{CL}\\
\mathbf{P}_{\mathrm{CL+}}(k+1) &= \mathbf{P}_{\mathrm{CL-}}(k+1) - \mathbf{K}_{\mathrm{CL}}(k+1)\mathbf{S}_{\mathrm{CL}}\mathbf{K}_{\mathrm{CL}}^T(k+1)
\end{align}
\end{subequations}
where $\mathbf{Y}_{\mathrm{CL}} = \begin{bmatrix} \mathbf{y}_{fl_1}&\cdots&\mathbf{y}_{fl_M} \end{bmatrix}^T\in\mathbb{R}^{2M\times1}$, $\mathbf{S}_{\mathrm{CL}} = \mathrm{Diag}\begin{bmatrix} \mathbf{S}_{fl_1}&\cdots&\mathbf{S}_{fl_M} \end{bmatrix}\in\mathbb{R}^{2M\times2M}$, and $M$ is the number of measurements of all followers of their leaders. The output matrix $\mathbf{C}_{\mathrm{CL}}\in\mathbb{R}^{2M\times4(N-3)}$, is constructed by assembling at $\mathbf{C}_{fl}$ and $-\mathbf{C}_{fl}$ at columns corresponding to the states of follower $f$ and leader $l$, respectively, for each measurement. Finally, to ensure that the observer is stable, the matrix $\mathbf{A}_{\mathrm{OBS}}(k)=\mathbf{A}_{\mathrm{CL}}-\mathbf{K}_{\mathrm{CL}}(k)\mathbf{C}_{\mathrm{CL}}(k)$ must be such that there exists $P\geq 0$ satisfying
\begin{equation}\label{LyapunovDT}
\mathbf{A}^{T}_{\mathrm{OBS}}(k)P\mathbf{A}_{\mathrm{OBS}}(k)+R(k)=P,
\end{equation}
for each $k\geq 0$, where $R(k)\geq r_{1}I_{4(N-3)}$, $r_{1}>0$. For practical implementation, we set $P=I_{4(N-3)}$ in equation (\ref{LyapunovDT}) and ensure that $\sigma_{\mathrm{max}}(\mathbf{A}_{\mathrm{OBS}}(k))<r_{2}$, where $r_{2}\in(0,1)$, in order to guarantee observer stability. 


\subsection{MAS Collective Dynamics}
Define 
\begin{subequations}
\begin{equation}
    \mathbf{X}_{\mathrm{SYS}}=\mathrm{vec}\left(
    \begin{bmatrix}
    \mathbf{r}_4&\cdots&\mathbf{r}_N&\dot{\mathbf{r}}_4&\cdots&\dot{\mathbf{r}}_N
    \end{bmatrix}
    ^T
    \right)
\end{equation}
\begin{equation}
    \hat{\mathbf{X}}_{\mathrm{SYS}}=\mathrm{vec}\left(
    \begin{bmatrix}
    \hat{\mathbf{r}}_4&\cdots&\hat{\mathbf{r}}_N&\dot{\hat{\mathbf{r}}}_4&\cdots&\dot{\hat{\mathbf{r}}}_N
    \end{bmatrix}
    ^T
    \right)
\end{equation}
\end{subequations}
as the state vectors of the MAS control system and the estimator, respectively, and matrix
\begin{equation}
    \mathbf{K}=\begin{bmatrix}
    \mathbf{0}&\mathbf{0}\\
    g_2\mathbf{A}&g_1\mathbf{A}
    \end{bmatrix}
\end{equation}
where $\mathbf{A}$ was previously defined in \eqref{partition}. Note that $\mathrm{vec}\left(\cdot\right)$ is the matrix vectorization operator. 
Given agent dynamics  \eqref{eq:states_prop}, control inputs \eqref{main2ninput}, and measurement/output equations (\ref{eq:measure}-\ref{eq:output}), the MAS collective dynamics and the estimator dynamics become
\begin{subequations}
\label{canonical}
\begin{equation}
\label{canonicala}
\resizebox{0.99\hsize}{!}{%
$
\begin{cases}
    {\mathbf{X}}_{\mathrm{SYS}}\left(k+1\right)=&\mathbf{A}_{\mathrm{SYS}}{\mathbf{X}}_{\mathrm{SYS}}\left(k\right)+\mathbf{B}_{\mathrm{SYS}}[{\mathbf{U}}_{\mathrm{SYS}}\left(k\right)+\mathbf{\eta}_{\mathrm{SYS}}\left(k\right)]+\mathbf{V}_{\mathrm{SYS}}\left(k\right)\\
    \mathbf{Y}_{\mathrm{SYS}}\left(k\right)=&\mathbf{C}_{\mathrm{SYS}}{\mathbf{X}}_{\mathrm{SYS}}\left(k\right)+\mathbf{\nu}_{\mathrm{SYS}}\left(k\right)
\end{cases}
,
$
}
\end{equation}
\begin{equation}
\label{canonicalb}
\resizebox{0.99\hsize}{!}{%
$
\begin{cases}
    \hat{\mathbf{X}}_{\mathrm{SYS}}\left(k+1\right)=&\mathbf{A}_{\mathrm{SYS}}\hat{{\mathbf{X}}}_{\mathrm{SYS}}\left(k\right)+\mathbf{B}_{\mathrm{SYS}}[{\mathbf{U}}_{\mathrm{SYS}}\left(k\right)+\mathbf{\eta}_{\mathrm{SYS}}\left(k\right)]+\mathbf{V}_{\mathrm{SYS}}\left(k\right)\\
    \hat{\mathbf{Y}}_{\mathrm{SYS}}\left(k\right)=&\mathbf{C}_{\mathrm{SYS}}\hat{{\mathbf{X}}}_{\mathrm{SYS}}\left(k\right)+\mathbf{\nu}_{\mathrm{SYS}}\left(k\right)
\end{cases}
,
$
}
\end{equation}
\end{subequations}
where $\mathbf{C}_{\mathrm{SYS}}\in \mathbb{R}^{2\left(N-3\right)\times 4\left(N-3\right)}$ is the observation matrix, $\mathbf{\eta}_{\mathrm{SYS}}\in \mathbb{R}^{4\left(N-3\right)\times 1}$ and $\mathbf{\nu}_{\mathrm{SYS}}\in \mathbb{R}^{4\left(N-3\right)\times 1}$ are Gaussian process and measurement noise vectors, respectively, and
\begin{subequations}
\begin{equation}
    \mathbf{A}_{\mathrm{SYS}}=\left(\mathbf{I}_{(N-3)}\otimes\begin{bmatrix}
    \mathbf{I}&\Delta T\mathbf{I}\\
    \mathbf{0}&\mathbf{I}
    \end{bmatrix}
    \right)\in \mathbb{R}^{4\left(N-3\right)\times 4\left(N-3\right)},
\end{equation}
\begin{equation}
    \mathbf{B}_{\mathrm{SYS}}=\left(\mathbf{I}_{(N-3)}\otimes\begin{bmatrix}
    \mathbf{0}&\mathbf{0}\\
    g_2\Delta T\mathbf{B}&g_1\Delta T\mathbf{B}
    \end{bmatrix}
    \right)\in \mathbb{R}^{4\left(N-3\right)\times 6},
\end{equation}
\begin{equation}
    \mathbf{V}_{\mathrm{SYS}}=\mathbf{K}\hat{\mathbf{X}}_{\mathrm{SYS}},
\end{equation}
\begin{equation}
    \mathbf{U}_{\mathrm{SYS}}=\mathrm{vec}\left(
    \begin{bmatrix}
    \mathbf{r}_1&\mathbf{r}_2&\mathbf{r}_3&\dot{\mathbf{r}}_1&\dot{\mathbf{r}}_2&\dot{\mathbf{r}}_3
    \end{bmatrix}
    ^T
    \right)\in \mathbb{R}^{6\times 1}.
\end{equation}
\end{subequations}
Note that control $g_1$ and $g_2$ are chosen such that the eigenvalues of matrix $\mathbf{A}_{\mathrm{SYS}}$ are strictly located on the left side of the complex plane.
The block diagram of the controllable form of MAS collective dynamics is shown in Fig. \ref{blockdiagram}.
\begin{figure}[ht]
\centering
\includegraphics[width=3.3   in]{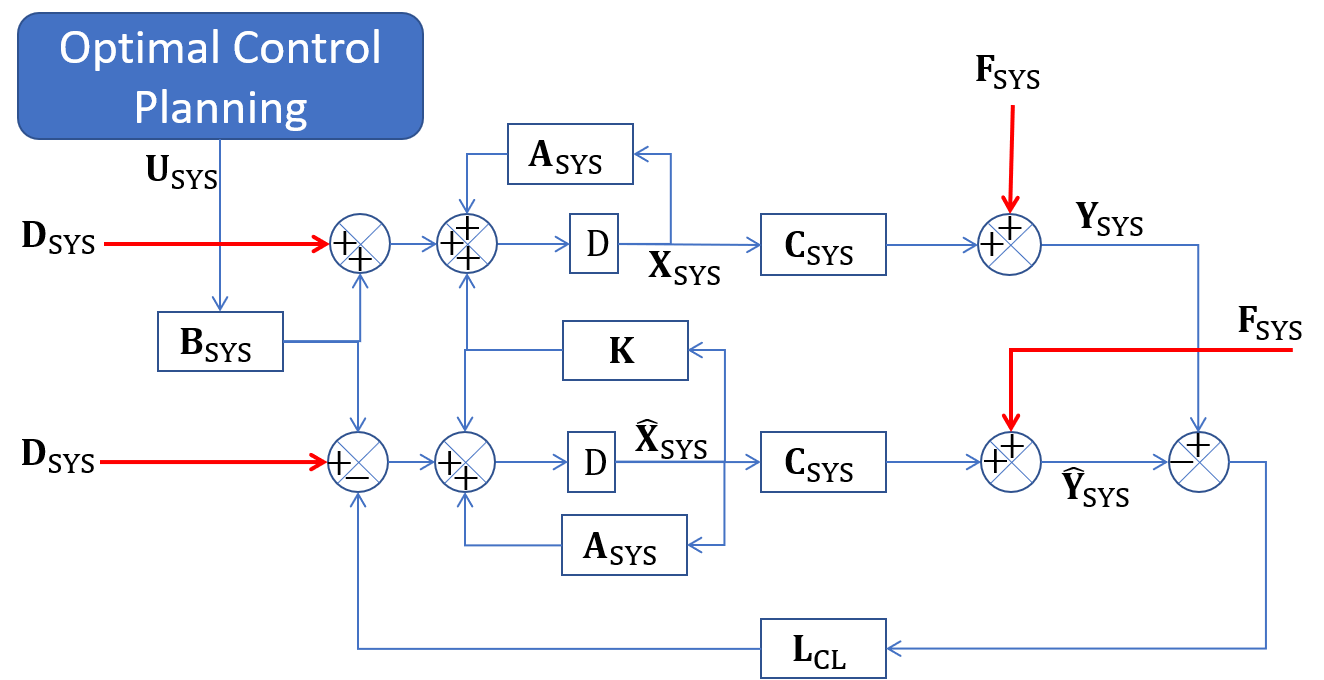}
\caption{Block diagram for the controllable form of the MAS collective dynamics.
}
\label{blockdiagram}
\end{figure}
\begin{remark}
To implement the proposed cooperative localization, we define  the similarity transformations
\begin{subequations}
\begin{equation}
\label{st}
    \mathbf{X}_{\mathrm{CL}}=\mathbf{O}\mathbf{X}_{\mathrm{SYS}},
\end{equation}
\begin{equation}
\label{st}
    \mathbf{A}_{\mathrm{CL}}=\mathbf{O}^T\mathbf{A}_{\mathrm{SYS}}\mathbf{O}
\end{equation}
\begin{equation}
\label{st}
    \mathbf{B}_{\mathrm{CL}}=\mathbf{O}^T\mathbf{B}_{\mathrm{SYS}}
\end{equation}
\end{subequations}
for relating $\mathbf{X}_{\mathrm{SYS}}$,  $\mathbf{A}_{\mathrm{SYS}}$, and $\mathbf{B}_{\mathrm{SYS}}$ to $\mathbf{X}_{\mathrm{CL}}$,  $\mathbf{A}_{\mathrm{CL}}$, and $\mathbf{B}_{\mathrm{CL}}$, respectively, where
$\mathbf{O}=\left[O_{lh}\right]\in \mathbb{R}^{4\left(N-3\right)\times 4\left(N-3\right)}$ is orthonormal and defined as follows:
\begin{equation}
    O_{lh}=\begin{cases}
        1&\mathrm{if~}l=4(i-1)+1,~h=i,~i=4,\cdots,N\\
        1&\mathrm{if~}l=4(i-1)+2,~h=i+N-3,~i=4,\cdots,N\\
        1&\mathrm{if~}l=4(i-1)+3,~h=i+2(N-3),~i=4,\cdots,N\\
        1&\mathrm{if~}l=4(i-1)+4,~h=i+3(N-3),~i=4,\cdots,N\\
        0&\mathrm{otherwise}
    \end{cases}
\end{equation}
\end{remark}

\begin{figure}[ht]
\centering
\includegraphics[width=3.3   in]{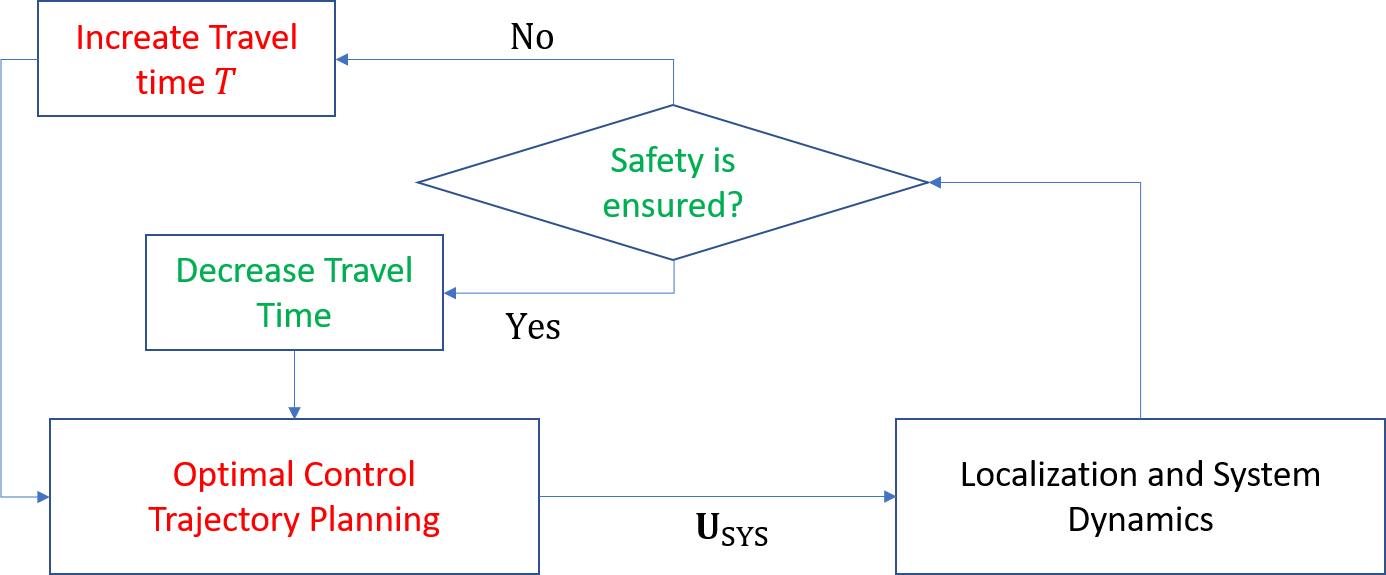}
\caption{The flowchart for assignment travel time $T$ ensuring safety of the MAS continuum deformation coordination.}
\label{SafetyFlowchart}
\end{figure}


\section{Safety Specification and Verification}\label{Safety Specification}
The following conditions provide safety requirements such as collision avoidance, boundedness, and follower containment in a continuum deformation coordination acquired by cooperative localization.

\textbf{Collision Avoidance Condition:} Let $\epsilon>0$ be the radius of the radius of the smallest ball enclosing every individual agent. 
Then, inter-agent collision is avoided, if
\begin{equation}
\label{const1111111}
\forall k,\qquad \bigwedge_{i\in \mathcal{V},~j\neq i} \|\mathbf{r}_i\left(k\right)-\mathbf{r}_j(k)\|>2\epsilon.
\end{equation}

\textbf{Boundedness Condition:} Deviation of every agent $i$ from the global desired trajectory $\mathbf{r}_{i,HT}(k)$ is bounded, if
\begin{equation}
\label{const222222222}
\forall k,\qquad     \bigwedge_{i\in \mathcal{V}}\|\mathbf{r}_i\left(k\right)-\mathbf{r}_{i,HT}(k)\|\leq\delta.
\end{equation}
where $\delta>0$ is constant.

\textbf{Follower Containment Condition:} Let $\mathbf{r}_m(k)=\left[x_m(k)~y_m(k)\right]^T$, $\mathbf{r}_j(k)=\left[x_j(k)~y_j(k)\right]^T$, $\mathbf{r}_h(k)=\left[x_h(k)~y_h(k)\right]^T$, $\mathbf{r}_t(k)=\left[x_t(k)~y_t(k)\right]^T$ be the position of arbitrary agents $m$, $j$, $h$, and $t$, respectively, where $m$, $j$, $h$ form a triangle at discrete time $k$, i.e. agents $m$, $j$, $h$ are not aligned at discrete time $k$. We define function 
\begin{equation}
    \mathbf{\Omega}\left(\mathbf{r}_m,\mathbf{r}_j,\mathbf{r}_h,\mathbf{r}_t\right)=\begin{bmatrix}
    x_m&x_j&x_h\\
    y_m&y_j&y_h\\
    1&1&1
    \end{bmatrix}
    ^{-1}
    \begin{bmatrix}
    x_t\\
    y_t\\
    1
    \end{bmatrix}
\end{equation}
at discrete every time $k$. Agent $t$ is inside the triangle defined by vertices $m$, $j$, and $h$, if $\mathbf{\Omega}\left(\mathbf{r}_m,\mathbf{r}_j,\mathbf{r}_h,\mathbf{r}_t\right)\geq \mathbf{0}$. We can ensured that all followers remain inside the leading triangle, defined by leaders $1$, $2$, and $3$, at every discrete time $k$, if
\begin{equation}
\label{const3333333333}
\forall k,\qquad     \bigwedge_{i\in \mathcal{V}}\mathbf{\Omega}\left(\mathbf{r}_1\left(k\right),\mathbf{r}_2(k),\mathbf{r}_3(k),\mathbf{r}_i(k)\right)\geq \mathbf{0},
\end{equation}
{\color{black}where ``$\bigwedge$'' is the ``wedge symbol.}
For given initial and final configurations of the leaders, we choose a sufficiently-large travel time $T=t_f-t_0\geq T^*$, where $T^*=t_f^*-t_0$ is assigned as the solution of the following constrained programming problem:
\begin{equation}
    T^*=\min T \label{min_T}
\end{equation}
subject to safety constraints \eqref{const1111111}, \eqref{const222222222}, \eqref{const3333333333}, MAS collective dynamics \label{canonicala}, estimation dynamics \label{canonicalb}, and the following inequality constraint:
\begin{equation}
    i\in \mathcal{V},~q\in \{x,y\},\qquad \ddot{q}_{i,HT}=v_{q,i}^*.
\end{equation}
where $q_{i,HT}^*$ is component $q\in \{x,y\}$ of desired trajectory of leader $i\in \mathcal{V}_L$. Note that $q_{i,HT}^*(t)$ can be determined, if initial and final configurations of the leaders are known, and $t_0$ and $t_f$. On the other hand, we need to know $q_{i,HT}^*$ of every leader agent $i\in \mathcal{V}_L$ to solve the minimum time optimization problem presented in this section. Therefore, leaders' desired trajectories and travel times must be solved interactively. The flowchart shown in Fig. \ref{SafetyFlowchart} illustrates how $T^*$ and $q_{i,HT}^*$ can be interactively determined for every leader $i\in \mathcal{V}_L$ such that all presented constraints are satisfied.

It must be noted that in EKF-based cooperative localization, random noise is introduced in both process and measurements. This leads to some uncertainty exist in the optimization process and results. Therefore, Monte Carlo simulations may be required to verify that all safety requirements are satisfied regardless of uncertainties in state estimation. 

\begin{figure}[ht]
\centering
\includegraphics[width=3.0   in]{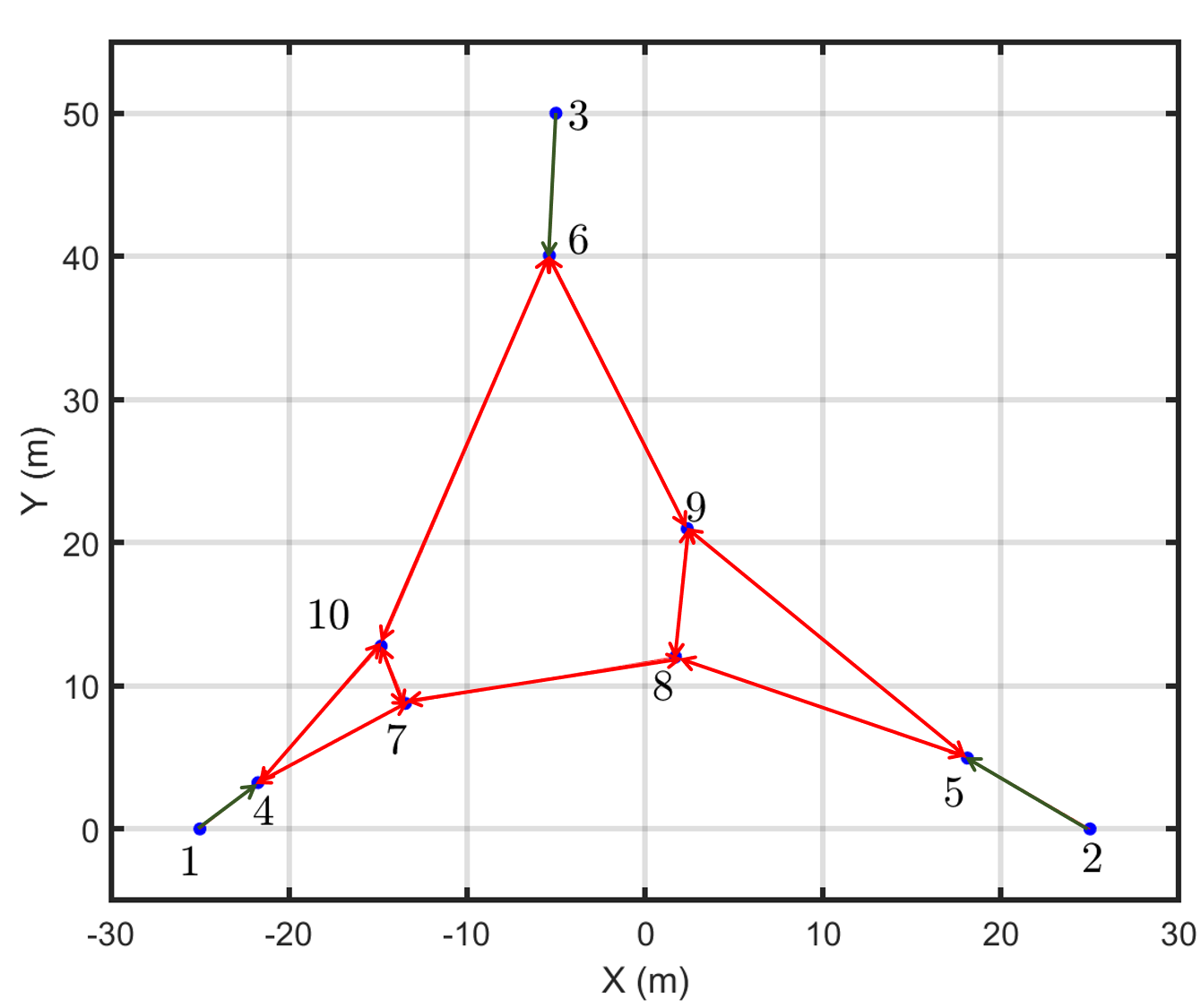}
\caption{Coordination graph.
}
\label{CoordinationGraph}
\end{figure}
\begin{figure}[ht]
\centering
\includegraphics[width=3.0   in]{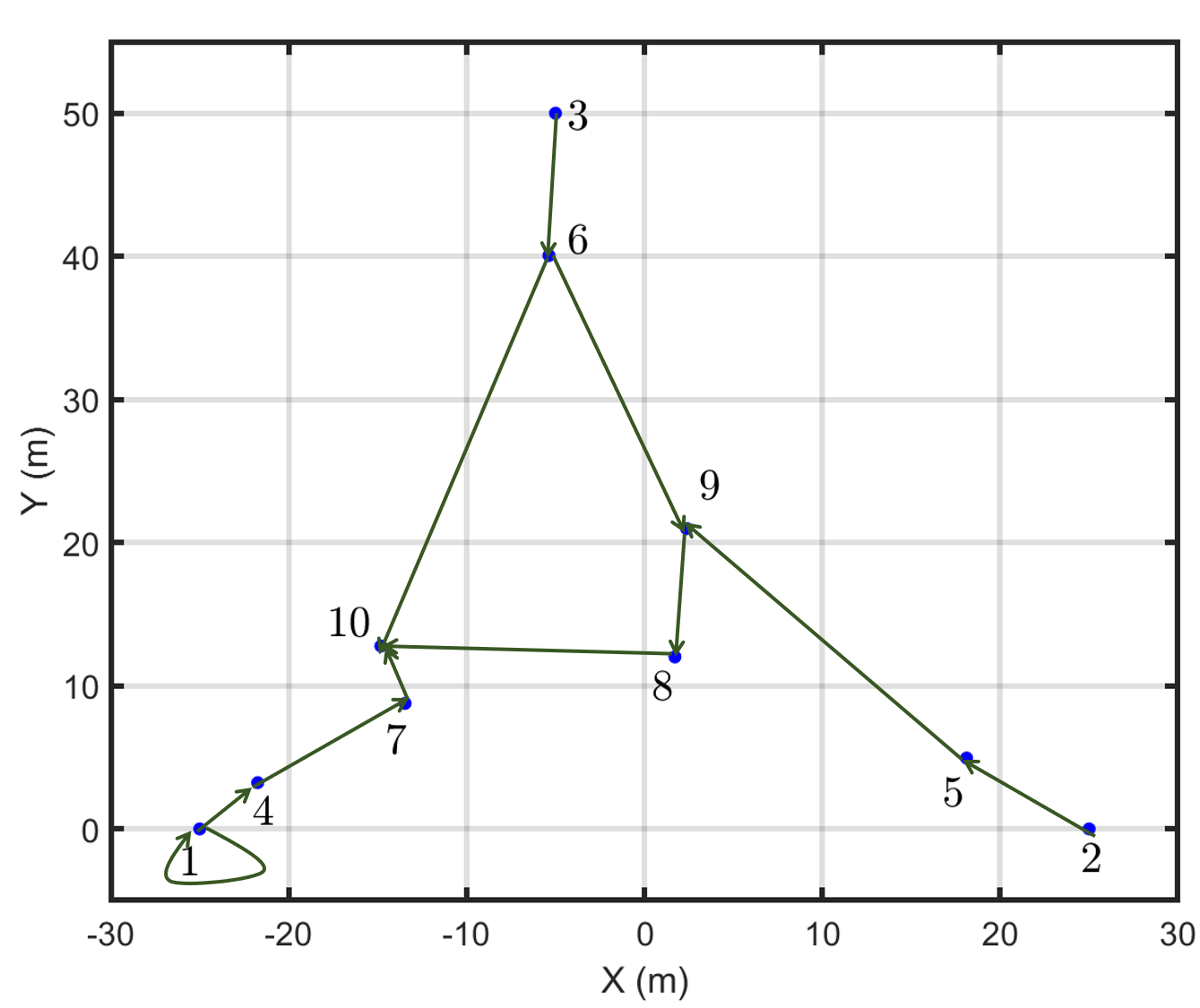}
\caption{Localization graph.}
\label{LocalizationGraph}
\end{figure}

\section{Simulation Results}\label{Simulation Results}
We consider continuum deformation of a MAS consisting of $10$ mobile robots identified by set $\mathcal{V}=\{1,\cdots,10\}$. The robot team moves in the $x-y$ plane where the MAS continuum deformation coordination is guided by leaders $\mathcal{V}_L=\left\{1,2,3\right\}$. The remaining mobile robots, identified by  $\mathcal{V}_F=\{4,\cdots,10\}$, are followers, and acquire the desired continuum deformation in a decentralized fashion through cooperative coordination and localization.

For cooperative coordination, the MAS uses the graph  $\mathcal{G}_c\left(\mathcal{V},\mathcal{E}_c\right)$ shown in Fig. \ref{CoordinationGraph} with the mobile robots located at their initial positions. Figure \ref{LocalizationGraph} shows is the graph $\mathcal{G}_l\left(\mathcal{V},\mathcal{E}_l\right)$ used in cooperative localization, which is mainly based on proximity. The graph implies that leader agent $1$ is capable of self measurement while agents $4,5,6,7,8,9$ can measure the relative positions of agents $1,2,3,4,9,5$. Meanwhile agent $10$ simultaneously measures the relative positions of agents $6$, $7$, and $8$.  

Since mobile robots are represented as double integrators, we must include process noise in the model to account for unmodeled dynamics. Hence we assume the standard deviation of the process noise to be $0.5~\text{m}/\text{s}^4$. We also assume that the measurements obtained by all agents are polluted by additive Gaussian noise with a standard deviation of $0.03~\text{m}$. In addition, all measurements are updated at a rate of $0.1~\text{s}$.

The minimum time $T^*$ determined from \eqref{min_T} which satisfies conditions \eqref{const1111111}, \eqref{const222222222}, and \eqref{const3333333333} for this scenario is determined to be only $10.1~\text{s}$ assuming continuous noise-free full-state feedback. However, the minimum time is increased to $20.5~\text{s}$ when states are estimated using cooperative localization, as determined by optimization and verified via Monte Carlo simulations. In these simulations, the gains for the control law in \eqref{main2ninput} were selected as $g_1=6$, $g_2=9$. The selected safety threshold values were $\epsilon=0.5$ and $\delta=0.5$.

Figure \ref{fig:paths} shows the paths of the 10 mobile robots in the simulation. Figure \ref{fig:estimation_errors} shows the global position estimation errors in $X$ and $Y$ directions for a typical simulation with cooperative localization. Figure \ref{fig:tracking_errors} shows the coordination tracking errors in $X$ and $Y$ directions. It can be concluded that the both estimation and tracking errors never exceed $0.8\%$ of the distance travelled by each agent.

\begin{figure}[ht!]
\centering
\includegraphics[width=3.3   in]{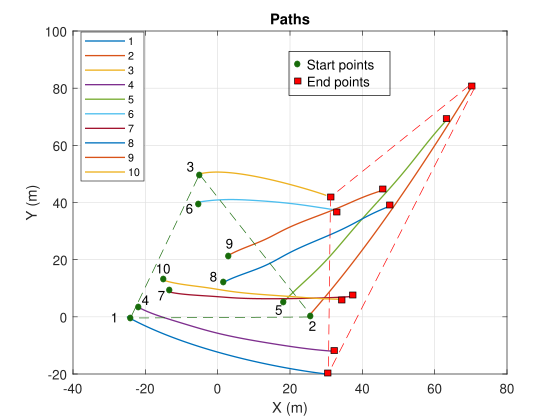}
\vspace{-0.3cm}
\caption{Paths of all 10 mobile robots.}
\label{fig:paths}
\end{figure}

\begin{figure}[ht!]
\centering
\includegraphics[width=3.3   in]{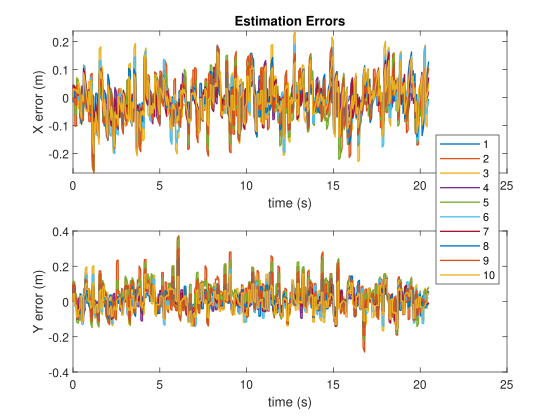}
\vspace{-0.3cm}
\caption{Position estimation errors of all 10 mobile robots.}
\label{fig:estimation_errors}
\end{figure}

\begin{figure}[ht!]
\centering
\includegraphics[width=3.3   in]{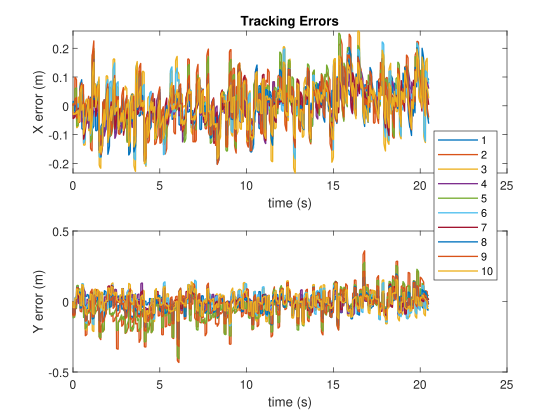}
\vspace{-0.3cm}
\caption{Tracking errors of all 10 mobile robots.}
\label{fig:tracking_errors}
\end{figure}

\section{Conclusion}\label{Conclusion}
This paper studied the problem of continuum deformation coordination using cooperative localization. The problem was presented as a decentralized leader follower coordination, where leaders' desired trajectories and travel time between the initial and final configurations are obtained by solving coupled optimization problem. More specifically, leaders' desired trajectories are determined by solving a constrained optimal control problem while the final time is minimized such that all safety requirements are satisfied. We also showed how follower agents can acquire the desired continuum deformation coordination through simultaneous communication and localization. As a result, a relatively-large number of agents of a multi-agent system can acquire a desired continuum deformation with low computation cost while all safety requirements are met and the MAS is capable of aggressive deformation in the geometrically-constrained environments.  
\section*{Acknowledgment}
This research was supported in part by the National Science Foundation under Award No. 1914581 and the Office of Naval Research under Award No. N00014-19-1-2255.




%



\bibliographystyle{IEEEtran}
\bibliography{reference}

\end{document}